\def\Tr{\mathop{\rm Tr}\nolimits}
\newcommand{\map}[1]{\mathcal{#1}}
\newcommand{\set}[1]{\mathsf{#1}}
\newcommand{\grp}[1]{\mathsf{#1}}
\def\>{\rangle}
\def\<{\langle}
  \gdef\Braket#1{\begingroup
\mathcode`\|32768\let|\BraVert\left<{#1}\right>\endgroup}}
\def\BraVert{\egroup\,\mid\,\bgroup}
\newtheorem{Theorem}{Theorem}
\newtheorem{Lemma}{Lemma}
\definecolor{kmblue}{rgb}{0.19, 0.25, 0.91}
\definecolor{kmred}{rgb}{0.79, 0.29, 0.0}
\definecolor{kmgreen}{rgb}{0, 0.42, 0.24}
\begin{document}

\title{Compression for qubit clocks}

\author{
 \IEEEauthorblockN{Yuxiang Yang$^{a}$, Giulio Chiribella$^{b,c}$ and Masahito Hayashi$^{d,e}$}
  \IEEEauthorblockA{
$~^{a}$Department of Computer Science, The University of Hong Kong \\
$~^{b}$Department of Computer Science, The University of Oxford \\
$~^c$ Canadian Institute for Advanced Research, CIFAR Program in Quantum Information Science\\
$~^{d}$Graduate School of Mathematics, Nagoya University \\
$~^{e}$ Centre for Quantum Technologies, National University of Singapore \\
    Email: {yangyx09@hku.hk, giulio.chiribella@cs.ox.ac.uk \& masahito@math.nagoya-u.ac.jp} }
}

\maketitle


\begin{abstract}  
Two-level (qubit) clock systems are often used to perform precise measurement of time. In this work, we propose a compression protocol for $n$ identically prepared states of qubit clocks. The protocol faithfully encodes the states into $(1/2)\log n$ qubits and $(1/2)\log n$ classical bits and works even in the presence of noise. If the purity of the clock states is fixed, $(1/2)\log n$ qubits are sufficient. We also prove that this protocol requires the minimum amount of total memory  among all protocols with vanishing error in the large $n$ limit.
\end{abstract}

\begin{IEEEkeywords}
quantum clocks,
compression,
quantum system,
identically prepared states
\end{IEEEkeywords}

\section{Introduction.}

Atomic clocks, which consist of thousands of identical clock qubits, have already been a mature technique for years. In the International System of Units, for instance, the unit of time is defined by the oscillation frequency between two hyperfine energy levels of the Cs$^{133}$ atom \cite{clock-standard}, while recent developments on optical atomic clocks \cite{optical-clock} promise even higher resolution.
A long sequence of protocols based on qubit clock states and techniques in quantum information processing  have been proposed \cite{chuang-clock,jozsa-clock,lukin-clock,network-nature,network-prl}, which have applications in GPS \cite{gps}, frequency standard \cite{fs} and astronomy \cite{astronomy1,astronomy2}. An efficient compression, therefore, will reduce the communication cost in these protocols.

In this work, we design compressors for $n$ identically prepared qubit clock states, which can be mixed due to noisy evolution. Compared to other compressors proposed by the authors \cite{yang-chiribella-2016-prl,universal,stopwatch}, the compressors here are tailor-made for qubit clock states and achieve the information theoretical limit for the total memory cost. The memory cost is $(1/2)\log n$ qubits in the leading order, with the same amount of ancillary classical bits if the purity of the clock state is undetermined. 
The cost matches the general statement in \cite{population}, which says that $(1/2)\log n$ (qu)bits are needed per degree of freedom.
Compared to  \cite{population}, the protocols here  are constructed using a completely different approach from the protocols, tailor-made for qubit clock states: The second order term of its memory cost is $O(\log\log n)$ in contrast to $x\log n$ (for any positive $x$) as in \cite{population}. The error of the protocols here also vanishes faster than that of the protocols in \cite{population} as $n$ grows large.
The optimality of the protocols here also holds in a stronger sense than in \cite{population}. Indeed, we show that any protocol with less memory than $(1/2-\delta)\log n$ (qu)bits of memory per degree of freedom with $\delta>0$ must have maximum error.

\section{Main result.}

Qubit clock states are basic units of time in quantum information. For instance, in quantum estimation theory one frequently considers pure clock states, while in reality the clock state could be mixed in the presence of noise. In general,
\emph{qubit clock states} are states of the form
\begin{align}\label{clock-define}
\rho_{t,p}&:=p\, |\phi_t\>\<\phi_t|+(1-p)\, |\phi_{t,\perp}\>\<\phi_{t,\perp}| \, \qquad t\in[0,2\pi),
\end{align}
with $p\in(1/2,1]$, $|\phi_t\>=\sqrt{s}|0\>+\sqrt{1-s}e^{it}|1\>$, and $|\phi_{t,\perp}\>:=\sqrt{1-s}|0\>-\sqrt{s}e^{it}|1\>$ for some fixed $s\in(0,1)$. We call $p$ the spectrum since it determines the spectrum of the clock state. If the clock state starts in a pure state ($p=1$) and goes through, for example, depolarizing time evolution, it will end up in the form (\ref{clock-define}) with $p=(e^{-\gamma t}+1)/2$, where $\gamma>0$ is the depolarizing parameter.

The task considered here is the compression of $n$ identical copies of a qubit clock state $\rho_{t,p}$, with the time parameter $t$ unknown. Such a task requires us to design a compressor, which consists of two components (both characterized by completely positive trace-preserving linear maps): the encoder $\map{E}$, which compresses the input state into a memory of the smallest possible size, and the decoder $\map{D}$, which recovers the state from the memory. 
To avoid distortion of time information, we require the compressor to be  {\em faithful}, which means that its error vanishes in the large $n$ limit. We choose as a measure of error the worst case trace distance between the original state and the recovered state $\map{D}\circ\map{E}(\rho_{t,p}^{\otimes n})$
\begin{align}
\epsilon:=\sup_{t}\frac12\|\rho_{t,p}^{\otimes n}-\map{D}\circ\map{E}(\rho_{t,p}^{\otimes n})\|_1 \, .
\end{align}
The main result of this work characterizes the minimum amount of memory needed in a faithful compression of quantum information of time:
\begin{Theorem}
$n$ identical copies of a qubit clock state (\ref{clock-define}) can be optimally compressed into $(1/2)\log n+O(\log\log n)$ qubits in a faithful fashion when its spectrum $p$ is known, $(1/2+x)\log n$ ($x$ is an arbitrary positive constant) additional classical bits are required if its spectrum $p$ is unknown.
\end{Theorem}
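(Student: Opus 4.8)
The plan is to treat the time parameter as a $U(1)$ symmetry and reduce the compression of the mixed clock state to that of an (almost) pure clock state living in a quadratically smaller Hilbert space. I would start from $\rho_{t,p}^{\otimes n}=e^{-\mathrm i tN}\rho_{0,p}^{\otimes n}e^{\mathrm i tN}$ with $N=\sum_k |1\>\<1|_k$, so that the whole family is the phase orbit of a single state. Purifying each qubit as $\rho_{t,p}=\mathrm{Tr}_R\ket{\xi_t}\bra{\xi_t}$ with $\ket{\xi_t}=\sqrt{p}\,\ket{\phi_t}\ket{0}_R+\sqrt{1-p}\,\ket{\phi_{t,\perp}}\ket{1}_R$, one sees that $\ket{\xi_t}=(e^{-\mathrm i t|1\>\<1|}\otimes I_R)\ket{\xi_0}$ is again a \emph{pure two-level clock state} — of the space spanned by the number-$0$ and number-$1$ parts of $\ket{\xi_0}$ — and, equivalently, that $\rho_{t,p}^{\otimes n}=\map{N}^{\otimes n}\big(\ket{\psi_t}\bra{\psi_t}^{\otimes n}\big)$ for a fixed dephasing channel $\map{N}$ (fixed by $p,s$) applied to a pure clock state $\ket{\psi_t}$. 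Thus the intrinsic time content of $\rho_{t,p}^{\otimes n}$ is that of an $n$-copy pure clock state, whose amplitudes in the occupation basis are binomial in the excitation number $k$ and carry the phase $e^{-\mathrm i kt}$; a Chernoff bound then makes this pure state $\epsilon$-close, uniformly in $t$, to its projection onto the band $|k-\bar k|\le c\sqrt{n\log(1/\epsilon)}$, an $O(\sqrt{n\log(1/\epsilon)})$-dimensional subspace — which is what puts the right order at $(1/2)\log n+O(\log\log n)$ qubits.

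To realize this by an operation on $\rho_{t,p}^{\otimes n}$ itself, I would apply Schur--Weyl duality in the eigenbasis $\{\ket{\phi_0},\ket{\phi_{0,\perp}}\}$ of $\rho_{0,p}$: write $\rho_{0,p}^{\otimes n}=\bigoplus_j p_j\,\sigma_j\otimes I_{\mathcal P_j}/d_j$ and discard the flat, information-free multiplicity registers $\mathcal P_j$. Then (a) $p_j$ is binomial-concentrated on $j\in J_{\mathrm{typ}}$, a window of $O(\sqrt{n\log(1/\epsilon)})$ values around $j^{*}=n(p-\tfrac12)$; (b) $\sigma_j$, being diagonal in the $\phi$-basis with geometric weights $((1-p)/p)^{j-m}$, is $\epsilon$-supported on its top $O(\log(1/\epsilon))$ levels; and (c) $N$ acts inside the spin-$j$ block as a rotation about the computational axis, hence diagonally in $\{\ket{j,m}_z\}$, so the orbit $\{\sigma_j^{(t)}\}_t$ never leaves the $t$-independent subspace $\mathcal W_j$ spanned by the $\ket{j,m}_z$ with $|m-j(2s-1)|\le O(\sqrt{n\log(1/\epsilon)})$. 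Recording the label $j$ classically ($O(\sqrt{n\log(1/\epsilon)})$ values, i.e.\ $(1/2)\log n+O(\log\log n)$ bits) and storing $\sigma_j^{(t)}\in\mathcal W_j$ quantumly ($(1/2)\log n+O(\log\log n)$ qubits) already yields a faithful compressor; the decoder reappends flat multiplicity registers and inverts the Schur transform, and the total error is bounded by the triangle inequality over the finitely many truncations together with the gentle-measurement lemma. When $p$ is \emph{known}, both $\{p_j\}$ and the $\phi$-basis are known to the decoder, so — after an identification of all the $\mathcal W_j$ with a single reference space under which the normalized $\sigma_j^{(t)}$ become, to the accuracy needed, independent of $j\in J_{\mathrm{typ}}$ up to a $j$-dependent unitary the decoder can undo — the classical register carrying $j$ need not be transmitted: the decoder prepares $\sum_j p_j\ket{j}\bra{j}$ by itself, leaving only the $(1/2)\log n+O(\log\log n)$ qubits. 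I expect step (c)-(d), the \emph{uniformization across spin sectors}, to be the main obstacle: the spin-$j$ blocks must be shown to be interchangeable (up to a decoder-side correction) even though $2j+1$ varies over a range of width $O(\sqrt n)$, and this is exactly where the semiclassical, central-limit behaviour of the Wigner $d$-functions near the turning point $m\approx j(2s-1)$ has to be controlled quantitatively; it is also the source of the $O(\log\log n)$ overhead (the $\log(1/\epsilon)$ factors with $\epsilon=1/\mathrm{poly}\log n$). The slack "$(1/2+x)\log n$'' classical bits in the unknown-$p$ case absorbs the dependence of the window widths on the error, which now has to vanish uniformly in $p$ as well.

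For optimality I would use distinguishability and information bounds. Since $F(\rho_{t,p},\rho_{t',p})^2=1-4s(1-s)(2p-1)^2\sin^2\!\big(\tfrac{t-t'}{2}\big)$ (using $p\in(1/2,1]$, $s\in(0,1)$), the $n$-copy fidelity $F(\rho_{t,p},\rho_{t',p})^n$ tends to $0$ whenever $|t-t'|\gg n^{-1/2}$; hence for $M=n^{1/2-\delta}$ equally spaced phases $t_1,\dots,t_M$ the states $\rho_{t_i,p}^{\otimes n}$ are pairwise $(1-o(1))$-distinguishable, so the ensemble $\{M^{-1},\rho_{t_i,p}^{\otimes n}\}$ has Holevo information $(1-o(1))\log M$. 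If $\map{D}\circ\map{E}$ has worst-case error $\epsilon\to 0$, the decoded states stay $(1-o(1))$-distinguishable, so by data processing and the (Alicki--Fannes-type) continuity of the Holevo quantity the memory ensemble $\{M^{-1},\map{E}(\rho_{t_i,p}^{\otimes n})\}$ also has Holevo information $\ge(1-o(1))\log M$, which is at most $\log D$ for a memory of total dimension $D$; hence $\log D\ge(1/2-\delta-o(1))\log n$ for every $\delta>0$. To upgrade this to "maximum error when $D<n^{1/2-\delta}$'' I would take $M=n^{1/2-\delta/2}\gg D$ and use a covering argument: among $M$ memory states on a $D$-dimensional system there is a cluster of $\ge M/D^{O(1)}$ that are pairwise within $o(1)$ in trace distance, so their decodings are $o(1)$-close while the originals are $(1-o(1))$-far, forcing the worst-case trace distance to $1-o(1)$. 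Repeating the construction with $p$ also ranging over a subinterval — where $\rho_{t,p}^{\otimes n}$ and $\rho_{t,p'}^{\otimes n}$ separate once $|p-p'|\gg n^{-1/2}$, uniformly in $t$ — gives a two-parameter family of $n^{1-o(1)}$ pairwise-distinguishable states, hence total memory $n^{1-o(1)}$ (the extra $(1/2)\log n$ classical bits), while the time marginal alone still forces the quantum part to be $\ge n^{1/2-o(1)}$. The delicate points in the converse are making the continuity estimates uniform enough to take the $\epsilon\to 0$ limit, and extracting a $1-o(1)$ (rather than a mere constant) from the clustering step.
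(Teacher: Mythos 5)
Your skeleton for achievability (Schur--Weyl block decomposition, concentration of the spin distribution, projecting each block onto the $t$-independent frequency band of width $O(\sqrt n\,\mathrm{polylog}\,n)$) matches the paper's, but the two steps that actually deliver the theorem are missing. First, the ``uniformization across spin sectors'', which you yourself flag as the main obstacle, is precisely where the paper's technical content lies: it builds a channel $\map{C}_{J\to K}$ (embed the spin-$J$ block into $2J$ symmetric qubits, apply the optimal universal cloner $2J\to 2K$ or discard qubits, re-embed) and uses the quantitative bound $\|\map{C}_{J\to K}(\rho_{t,p,J})-\rho_{t,p,K}\|_1\le \delta^{1-x}+O(\delta)$ with $\delta=|J-K|/J$ (Lemma 1 of the reference on universal compression); with this, the encoder maps every typical $J$ to the fixed $J_0=(p-1/2)(n+1)$ and the decoder resamples $K\sim q_K$ and maps back, which is how the classical register is eliminated when $p$ is known. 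Your substitute --- a $j$-dependent unitary identification of band subspaces justified by semiclassical control of Wigner $d$-functions --- is asserted, not proved, and it is not clear a unitary (rather than a cloner-type channel) suffices; since this is the crux of halving the memory, it is a genuine gap. Second, your unknown-$p$ accounting does not work: you charge $(1/2)\log n+O(\log\log n)$ classical bits for recording $j$ on the grounds that it effectively takes $O(\sqrt n\,\mathrm{polylog})$ values, but the location of that typical window is set by the unknown $p$, so a decoder ignorant of $p$ cannot decode such a short description (storing $j$ outright costs about $\log n$ bits). The paper's mechanism is different: partition the full range $\{0,\dots,n/2\}$ into $b=\lfloor n^{1/2+x}\rfloor$ intervals of width $r=O(n^{1/2-x})$, store only the interval index, let the decoder resample $K$ uniformly inside the interval, and bound the resulting error by the smoothness of $q_J$ (the ratio $q_K/q_J\to1$ via De Moivre--Laplace) together with conversion to and from the interval median; your remark that the slack ``absorbs the dependence of the window widths on the error'' misidentifies why the extra $x\log n$ bits appear.

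For optimality, your Holevo/data-processing argument gives only a weak converse (a memory lower bound for protocols with vanishing error), while the paper proves the strong converse on which its optimality claim rests: if $d_{\rm enc}=O(n^{1/2-\delta})$ then the average error tends to $1$. Your proposed upgrade by clustering is false: in a $D$-dimensional state space one can place far more than $D^{O(1)}$ states that are pairwise almost perfectly distinguishable (covering numbers at constant trace-distance radius grow exponentially in $D^2$; random pure states already give a counterexample), so ``among $M$ memory states there is a cluster of $M/D^{O(1)}$ pairwise $o(1)$-close states'' does not hold, and no near-maximal error follows from pairwise separation of the inputs. The paper's route is a counting argument against \emph{simultaneous} discrimination: Markov's inequality yields about $5d_{\rm enc}/\Delta$ times $t_i$ with error at most $1-\Delta/4$ and spacing $\Omega(n^{-1/2+\delta})$, a single POVM $\{O_i\}$ identifies the corresponding inputs almost perfectly, hence $\Tr[\map{D}\circ\map{E}(\rho_{t_i,p}^{\otimes n})O_i]\ge \Delta/4-\epsilon$ for every $i$, and summing and using $\Tr[\map{D}(I_{\rm enc})]=d_{\rm enc}$ gives $d_{\rm enc}>\frac{5d_{\rm enc}}{\Delta}(\frac{\Delta}{4}-\epsilon)$, a contradiction. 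That dimension-counting step (multi-hypothesis rather than pairwise distinguishability through a $d_{\rm enc}$-dimensional bottleneck) is the idea your converse is missing.
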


\section{Qubit clock compressors.}
Here we present compressors with reduced classical memory size, for clock states defined by Eq. (\ref{clock-define}).  We distinguish between two cases: One is the case of known spectrum,  where the eigenvalue  $p$ has a fixed  value $p_0$, known {\em a priori}.   This is the case when the state evolves noiselessly. The other case is the case of unknown spectrum, where prior knowledge  of $p$ is not assumed (except that $p$ is not equal to $1/2$). We also show that both compressors are faithful.

Before precisely defining the compressors, it is convenient to introduce some properties of qubit clock states. First, the $n$-fold product state $\rho_{t,p}^{\otimes n}$ can be decomposed as 
\begin{align}\label{decomp}
\rho_{t,p}^{\otimes n}\simeq\sum_{J=0}^{n/2}q_J\left(|J\>\<J|\otimes\rho_{t,p,J}\otimes\frac{I_{m_J}}{m_J}\right) \, , 
\end{align}
where $\simeq$ denotes the unitary equivalence implemented by the Schur transform \cite{bacon-chuang-2006-prl,harrow2005applications}, $J$ is the quantum number of the total spin, $q_{J}$ is a probability distribution,  $|J\>$ is the state of the index register,  $\rho_{t,p,J}$ is the state of the representation register,  and  $I_{m_J}/m_J$  is the maximally mixed state in a suitable subspace of the multiplicity register    \cite{fulton-harris,book-hayashi}.  
The state $\rho_{t,p,J}$ can be expressed in the form 
\begin{align}
\rho_{t,p,J}=U_t^{\otimes n}\left(\rho_{p,J}\right)U_t^{\dag \, \otimes n}\qquad U_t=|0\>\<0|+e^{it}|1\>\<1|,
\end{align}
where the fixed state  $\rho_{p,J}$ has the form
\begin{align}
&\rho_{p,J}:=(N_{J})^{-1}\sum_{m=-J}^{J}p^{J+m}(1-p)^{J-m}|J,m\>_s\<J,m|_s\\
&N_J:=\sum_{k=-J}^{J}p^{J+m}(1-p)^{J-m}
\end{align}
where $|J,m\>_s$ is the orthonormal basis defined as
\begin{align}\label{sym}
|J,m\>_s:=\frac{\sum_{\pi\in\grp{S}_{2J}}V_\pi |\phi_0\>^{\otimes (J+m)}|\phi_{0,\perp}\>^{\otimes (J-m)}}{\sqrt{(2J)!(J+m)!(J-m)!}}
\end{align}
with $|\phi_0\>=\sqrt{s}|0\>+\sqrt{1-s}|1\>$, $|\phi_{0,\perp}\>=\sqrt{1-s}|0\>-\sqrt{s}|1\>$, $\grp{S}_{2J}$ being the $(2J)$-symmetric group and $V_\pi$ being the unitary implementing the permutation $\pi$. 

As a key ingredient to reduce the quantum cost of storing $\rho_{t,p,J}$, we introduce a class of quantum operations called the frequency projection channels.
The frequency projection channel $\map{P}_{{\rm proj}, J}$ is defined as
\begin{align}
\map{P}_{{\rm proj},J}(\rho)&:=P_{{\rm proj},J}\,\rho\,P_{{\rm proj},J}+\left(1-\Tr[\rho\,P_{{\rm proj},J}]\right)\rho_0\nonumber \\
P_{{\rm proj},J}&:=\sum_{|m-(2s-1)J|\le \frac{\sqrt{J}\log J}2}|J,m\>\<J,m|
\label{define-setC},
\end{align}
where $\rho_0$ is a fixed state of the representation register. It can be seen that the frequency projection channels, when applied to $\rho_{t,p,J}$, cut down their size by almost half. Furthermore, the projection is faithful for large enough $J$: 
\begin{Lemma}[\cite{stopwatch}]\label{lemma-proj}
For large $J$, the frequency projection error $\epsilon_{{\rm proj},J}:=\frac12\left\|\map{P}_{{\rm proj},J}(\rho_{t,p,J})-\rho_{t,p,J}\right\|_1$ is upper bounded as
\begin{align}
\epsilon_{{\rm proj},J}\le (3/2)J^{{-\frac18\ln\left(\frac{p}{1-p}\right)}}+O\left(J^{-\frac18\ln J}\right)
\end{align}
 for every $t$.
\end{Lemma}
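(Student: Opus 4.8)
The plan is to bound $\epsilon_{{\rm proj},J}$ by the probability that the frequency leaves the window, reduce that to a statement about the spectrum of $\rho_{p,J}$ together with a concentration estimate, and then control the resulting sum. First, exploit covariance: since $U_t=|0\>\<0|+e^{it}|1\>\<1|$ is diagonal in the basis $\{|J,m\>\}$, the operator $U_t^{\otimes n}$ acts on the representation register as an operator diagonal in $\{|J,m\>\}$, hence commutes with $P_{{\rm proj},J}$. Therefore $p_{{\rm fail},J}:=\Tr[\rho_{p,J}(I-P_{{\rm proj},J})]=\Tr[\rho_{t,p,J}(I-P_{{\rm proj},J})]$ is independent of $t$, and $\map{P}_{{\rm proj},J}(\rho_{t,p,J})-\rho_{t,p,J}=U_t^{\otimes n}\bigl(P_{{\rm proj},J}\,\rho_{p,J}\,P_{{\rm proj},J}-\rho_{p,J}\bigr)U_t^{\dag\,\otimes n}+p_{{\rm fail},J}\,\rho_0$; by unitary invariance of the trace norm $\epsilon_{{\rm proj},J}\le\tfrac12\|P_{{\rm proj},J}\,\rho_{p,J}\,P_{{\rm proj},J}-\rho_{p,J}\|_1+\tfrac12\,p_{{\rm fail},J}$, which no longer depends on $t$, so it suffices to treat $t=0$.

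Second, apply the gentle measurement lemma. With $Q:=I-P_{{\rm proj},J}$, split $\rho_{p,J}$ into the four blocks $P_{{\rm proj},J}\rho_{p,J}P_{{\rm proj},J}$, $P_{{\rm proj},J}\rho_{p,J}Q$, $Q\rho_{p,J}P_{{\rm proj},J}$, $Q\rho_{p,J}Q$ and use $\|P_{{\rm proj},J}\rho_{p,J}Q\|_1\le\sqrt{\Tr[P_{{\rm proj},J}\rho_{p,J}]\,\Tr[Q\rho_{p,J}]}\le\sqrt{p_{{\rm fail},J}}$ (Cauchy--Schwarz for the trace norm) to obtain $\|P_{{\rm proj},J}\rho_{p,J}P_{{\rm proj},J}-\rho_{p,J}\|_1\le 2\sqrt{p_{{\rm fail},J}}+p_{{\rm fail},J}$. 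Hence $\epsilon_{{\rm proj},J}\le\sqrt{p_{{\rm fail},J}}+p_{{\rm fail},J}\le\tfrac32\sqrt{p_{{\rm fail},J}}$ once $p_{{\rm fail},J}\le\tfrac14$, i.e.\ for all large $J$; using $\sqrt{a+b}\le\sqrt a+\sqrt b$, it therefore suffices to prove $p_{{\rm fail},J}\le J^{-\frac14\ln\frac{p}{1-p}}+O\!\bigl(J^{-\frac14\ln J}\bigr)$.

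Third, analyse $p_{{\rm fail},J}$ through the spectral decomposition $\rho_{p,J}=N_J^{-1}\sum_{j=0}^{2J}p^{2J-j}(1-p)^{j}\,|J,J-j\>_s\<J,J-j|_s$, which gives $p_{{\rm fail},J}=N_J^{-1}\sum_{j}p^{2J-j}(1-p)^{j}\,q_{J-j}$ with $q_{J-j}:=\<J,J-j|_s(I-P_{{\rm proj},J})|J,J-j\>_s$. The eigenvalue weights form a truncated geometric sequence of ratio $(1-p)/p<1$; inserting the exact value $N_J=\bigl(p^{2J+1}-(1-p)^{2J+1}\bigr)/(2p-1)$ yields, with no spurious prefactor, $\sum_{j>j_0}N_J^{-1}p^{2J-j}(1-p)^{j}\le\bigl(\tfrac{1-p}{p}\bigr)^{j_0+1}\bigl(1-(\tfrac{1-p}{p})^{2J+1}\bigr)^{-1}$, which for $j_0:=\lceil(\ln J)/4\rceil$ is $\le J^{-\frac14\ln\frac{p}{1-p}}$ for all large $J$ (the extra factor $\tfrac{1-p}{p}<1$ absorbs the ceiling and the $(1-(\tfrac{1-p}{p})^{2J+1})^{-1}$ correction). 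For the remaining $j\le j_0$, $q_{J-j}$ is exactly the probability that, on measuring the rotated Dicke state $|J,J-j\>_s$ (which, for frequency purposes, behaves like $e^{-i\beta J_y}|J,J-j\>$ with $\cos\beta=2s-1$) in the computational basis, the number of $1$'s falls outside $[\,2J(1-s)-\tfrac{\sqrt J\log J}{2},\,2J(1-s)+\tfrac{\sqrt J\log J}{2}\,]$; its mean is $2J(1-s)+(2s-1)j$, so for $j\le j_0=O(\log J)$ one needs a deviation of at least $\tfrac{\sqrt J\log J}{2}(1-o(1))$ from the mean, and because this frequency has Gaussian tails on the scale $\sqrt J$ one gets $q_{J-j}\le J^{-\Omega(\log J)}$ uniformly over $j\le j_0$. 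Thus $\sum_{j\le j_0}N_J^{-1}p^{2J-j}(1-p)^{j}q_{J-j}\le\max_{j\le j_0}q_{J-j}=O\!\bigl(J^{-\frac14\ln J}\bigr)$, so $p_{{\rm fail},J}\le J^{-\frac14\ln\frac{p}{1-p}}+O\!\bigl(J^{-\frac14\ln J}\bigr)$ and $\epsilon_{{\rm proj},J}\le\tfrac32 J^{-\frac18\ln\frac{p}{1-p}}+O\!\bigl(J^{-\frac18\ln J}\bigr)$ uniformly in $t$.

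The main obstacle is the concentration bound $q_{J-j}\le J^{-\Omega(\log J)}$ for $j\ge 1$. For $j=0$ the state is the product state $|\phi_0\>^{\otimes 2J}$, the number of $1$'s is a sum of $2J$ i.i.d.\ bits, and Hoeffding's inequality gives the bound at once; but for $j\ge1$ the symmetrization correlates the qubits and the frequency distribution is $|d^J_{m',J-j}(\beta)|^2$, a squared Wigner $d$-function. A naive Chernoff bound only sees a variance proxy of order $Jj$ and hence degrades to a merely polynomial estimate as soon as $j$ is of order $\log J$; one genuinely needs the finer fact that $|d^J_{m',J-j}(\beta)|$ decays on the scale $\sqrt J$, the number-state prefactor $\sim(m'-\langle m'\rangle)^{j}$ contributing only sub-exponential factors once $m'$ is many standard deviations from the cluster point. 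This can be extracted from the integral or three-term-recursion representation of $d^J$, or from its Hermite-function approximation near the highest weight; it is also the step that pins down the admissible (logarithmic) size of the cutoff $j_0$ and hence the exponent $\tfrac18\ln\frac{p}{1-p}$ in the final bound.
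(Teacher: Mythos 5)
Your reductions are sound, and they are the natural ones: the covariance step (both $P_{{\rm proj},J}$ and $U_t^{\otimes n}$ are diagonal in the $|J,m\>$ basis, so they commute and the error is $t$-independent), the gentle-measurement/block argument giving $\epsilon_{{\rm proj},J}\le\frac32\sqrt{p_{{\rm fail},J}}$ once $p_{{\rm fail},J}\le 1/4$, and the geometric-tail estimate with the exact $N_J=\bigl(p^{2J+1}-(1-p)^{2J+1}\bigr)/(2p-1)$ and cutoff $j_0=\lceil(\ln J)/4\rceil$, which correctly yields the first term $J^{-\frac14\ln\frac{p}{1-p}}$ before taking the square root. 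Note that this paper does not prove the lemma at all — it imports it from \cite{stopwatch} — so the comparison is with the cited proof rather than with anything in the present text; your outline is a plausible reconstruction of its skeleton.

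The genuine gap is the one you flag yourself: the concentration bound $q_{J-j}\le O\bigl(J^{-\frac14\ln J}\bigr)$ (or even just $J^{-\Omega(\ln J)}$) uniformly over $1\le j\le j_0$ is asserted, not proved. For $j=0$ Hoeffding does the job, but for $j\ge1$ the Hamming-weight distribution of the rotated Dicke state $|J,J-j\>_s$ is $|d^J_{m',J-j}(\beta)|^2$, and, as you note, a naive Chernoff/Hoeffding argument degrades to merely polynomial decay once $j$ is of order $\ln J$. This estimate is not a routine detail: the admissible size of $j_0$, and hence \emph{both} exponents in the lemma ($\frac18\ln\frac{p}{1-p}$ and $\frac18\ln J$), are fixed precisely by how far in $j$ the superpolynomial tail bound extends. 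Pointing to the integral representation, the three-term recursion, or a Hermite-type approximation of $d^J_{m',J-j}$ does not discharge it: one must actually show that the degree-$j$ polynomial prefactor, of size $(\log J)^{O(j)}=J^{O(\ln\ln J)}$ at the edge of the window $|m-(2s-1)J|\le\frac{\sqrt J\log J}{2}$, does not erode the Gaussian factor $J^{-\frac14\ln J}$, uniformly in $j\le\frac{\ln J}{4}$ and for fixed $s\in(0,1)$, and track the constants so that the stated exponent $\frac18$ survives the square root. Until that tail bound on the Wigner $d$-functions (equivalently, on the associated Krawtchouk/Jacobi-type sums) is established, what you have is a correct reduction of the lemma to an unproven key estimate — which is exactly the technical core of the result being cited.
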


\medskip
\noindent{\bf Description of the compressors.}
For known spectrum, the compressor works following the instruction below:
\begin{itemize}
\item {\em Encoder.} 
Define the operation $\map{C}_{J\to K}$ as the concatenation of the following operations: \\
$i)$ for an input state of a spin-$J$ system, encode the state into $2J$ qubits using the isometry $V$ that maps the basis of the spin-$J$ system into the symmetric basis of $2J$ qubits; \\
$ii)$ apply the optimal universal cloner \cite{werner} from $2J$ qubits to $2K$ qubits if $J\ge K$ or discard $2(J-K)$ qubits if $J>K$; \\
$iii)$ encode the state back into the representation register using the inverse of the isometry $V$.\\
First perform the Schur transform and measure the state of the index register. For outcome $J$, apply the operation $\map{C}_{J\to J_0}$  to the state of the representation register with $J_0=(p-1/2)(n+1)$.
Then apply to the output state of the cloner the frequency projection $\map{P}_{{\rm proj}, J_0}$ defined by Eq. (\ref{define-setC}). Encode the state after the projection into a quantum memory.

\item {\em Decoder.} Sample a value $K$ with the probability distribution $q_K$ and apply the operation $\map{C}_{J_0\to K}$ to the state of the quantum memory. Then append the state $I_{m_{K}}/m_{K}$ to the output. Finally, perform the inverse of the Schur transform to get the recovered state. 

\end{itemize}
The compressor only requires a quantum memory of $\log (\sqrt{J_0}\log J_0+1)$ qubits, which is upper bounded by $(1/2)\log n+\log\log n+1$. 

For unknown spectrum, the compressor requires an additional procedure to discretize and encode the spectrum.
For this purpose, we define a partition of the set $\{0,\dots,n/2\}$ into $b=\lfloor n^{1/2+x}\rfloor$ intervals $\set L_1, \dots \set L_b$ for $x>0$, defined as follows:
\begin{align*}
\set{L}_m&=\{  (m-1) \,r ,\dots, m \, r  -1 \} \qquad \ m=  1,\dots,  b-2 \\
\set{L}_{b-1}&=\{0,\dots,n/2-1\}\setminus\cup_{i=1}^{b-2}\set{L}_i \\
\set{L}_{b}&=\{n/2\}
\end{align*} 
 where $r$ is an integer defined by the relation $r(b-2)< n/2-1\le r(b-1)$ so that $$r=O\left(n^{1/2-x}\right)$$ is either equal to ($m\le b-2$) or larger than ($m=b-1,b$) the width of each interval $\set{L}_m$. Note that  the element $n/2$ is singled out to ensure that the protocol works well for pure states ($p=1$).
We also denote by $$\set{Med}=\left\{\left\lfloor \frac r2\right\rfloor, \left\lfloor \frac{3r}2\right\rfloor, \dots,  \frac{n}2\right\}$$ the collection of all medians of these intervals (except for the last one). The set $\set{Med}$ shall be used as an index set for the intervals. 
 
 Now, we define the compressor for unknown spectrum as in the following:
\begin{itemize}
\item {\em Encoder.} 
First perform the Schur transform and measures the state of the index register. For outcome $J$, store the index $i(J)$ in a classical memory so that $J\in\set{L}_{i(J)}$.
Apply the operation $\map{C}_{J\to f(J)}$  to the state of the representation register correspondingly, where 
$f$ is a function mapping any $J\in\{0,\dots,n/2\}$ to the median of the subset containing $J$:
\begin{align*}
f: J\to J_{\rm med}\in\set{Med}\quad{\rm s.t.}\quad J_{\rm med}\in\set{L}_{i(J)}.
\end{align*}
Then apply the frequency projection $\map{P}_{{\rm proj}, f(J)}$ (\ref{define-setC}). Encode the state after projection into a quantum memory.

\item {\em Decoder.} Read the classical memory for the value of $i(J)$ and sample a value $K$ uniformly in the subset $\set{L}_{i(J)}$. Apply the operation $\map{C}_{f(J)\to K}$ to the state of the quantum memory. Then append the state $I_{m_{K}}/m_{K}$ to the output. Finally, perform the inverse of the Schur transform to get the recovered state. 

\end{itemize}
The compressor requires a quantum memory of $\log (\sqrt{J}\log J+1)$ qubits, which is upper bounded by $(1/2)\log n+\log\log n+1$. It also requires a classical memory of $\log b\le (1/2+x)\log n$ bits.

\medskip
\noindent{\bf Error analysis.}
To avoid redundancy, we only analyze the error in the (more complex) case where the spectrum is unknown, while the error for fixed spectrum can be analyzed in the same way with less intermediate steps. The error bound for unknown spectrum also holds for fixed spectrum.

\begin{Lemma}
The error of compression can be bounded as 
\begin{align}
\epsilon_{t,p}\le \left(\frac{n}{2p-1}\right)^{-1/2}+\frac32 [(p-1/2)n]^{-\frac18\ln\frac{p}{1-p}}.
\end{align} 
\end{Lemma}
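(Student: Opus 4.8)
\noindent The plan is to exploit the block structure of Eq.~(\ref{decomp}). Both $\rho_{t,p}^{\otimes n}$ and $\map{D}\circ\map{E}(\rho_{t,p}^{\otimes n})$ are permutation invariant, so after the Schur transform they are block diagonal with respect to the index register, and the trace distance $\epsilon_{t,p}$ splits as a sum over the total spin $J$. Reading off the action of the encoder followed by the decoder, the index--$J$ block of $\map{D}\circ\map{E}(\rho_{t,p}^{\otimes n})$ is
\[
\frac{1}{|\set{L}_{i(J)}|}\sum_{J'\in\set{L}_{i(J)}}q_{J'}\;\map{C}_{f(J)\to J}\circ\map{P}_{{\rm proj},f(J)}\circ\map{C}_{J'\to f(J)}\!\left(\rho_{t,p,J'}\right)\otimes\frac{I_{m_J}}{m_J},
\]
while the corresponding block of $\rho_{t,p}^{\otimes n}$ is $q_J\,\rho_{t,p,J}\otimes I_{m_J}/m_J$; as distinct index values are orthogonal, $\epsilon_{t,p}$ equals the sum over $J$ of the trace distances between these two operators.

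\noindent I would bound each block distance by a triangle inequality through the ``ideal'' operator $\bar q_{i(J)}\,\rho_{t,p,J}\otimes I_{m_J}/m_J$, where $\bar q_{i}:=|\set{L}_{i}|^{-1}\sum_{J'\in\set{L}_{i}}q_{J'}$. This isolates three contributions: (a)~a \emph{resampling} term $\tfrac12\sum_J|q_J-\bar q_{i(J)}|$, i.e.\ the total variation between the spin distribution $q$ and its coarse-graining over the partition $\set{L}_1,\dots,\set{L}_b$; (b)~\emph{cloning} terms $\tfrac12\|\map{C}_{a\to b}(\rho_{t,p,a})-\rho_{t,p,b}\|_1$ for consecutive pairs $a,b$ taken from $\{J',f(J'),J\}$; and (c)~a \emph{projection} term, bounded via Lemma~\ref{lemma-proj} by $\epsilon_{{\rm proj},f(J)}$. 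For (b) I would use the $SU(2)$-covariance of Werner's cloner to reduce to $t=0$: for $a>b$ the map is a partial trace over $2(a-b)$ qubits, and for $a<b$ the optimal cloner reaches global fidelity $(2a+1)/(2b+1)$ with $\rho_{0,p,b}$; since $|J'-f(J')|,|f(J')-J|\le r=O(n^{1/2-x})$ while $a,b=\Theta(n)$ for typical $J$, these terms enter only at lower order (see the last paragraph). For (c), evaluating Lemma~\ref{lemma-proj} at the typical spin $J_0\approx(p-1/2)n$ produces exactly $\tfrac32[(p-1/2)n]^{-\frac18\ln\frac{p}{1-p}}$, the second term of the bound.

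\noindent It then remains to show that the resampling term (a), together with the total weight $\sum_{J}q_J$ over spins far from $J_0$ --- where the estimates in (b) and (c) degrade --- is at most $(n/(2p-1))^{-1/2}$. For this I would use the Schur--Weyl identity $q_J=m_J\,(p(1-p))^{n/2-J}\,N_J$: up to $O(\log n)$ corrections $\tfrac1n\log q_J$ equals minus a relative-entropy rate vanishing only at $J=J_0$, so $q_J$ is sharply concentrated at $J_0$ with Gaussian-type tails on the scale $\sqrt{p(1-p)\,n}$. Because the partition width $r=O(n^{1/2-x})$ is small on this scale, $q$ varies little across each interval, which controls (a); the same tail estimate bounds the mass at atypical spins and hence the residual contributions of (b) and (c). Collecting these estimates yields the coefficient $(n/(2p-1))^{-1/2}$.

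\noindent The step I expect to be the main obstacle is the quantitative cloning estimate in (b): one must show that $\map{C}_{a\to b}$ --- in the protocol always followed by the frequency projection --- maps $\rho_{t,p,a}$ to a state whose distance from $\rho_{t,p,b}$ is controlled by $|a-b|/a$ rather than by its square root, using that $\rho_{p,J}$ is supported, up to exponentially small weight, on the $O(\sqrt J\log J)$--wide window selected by $\map{P}_{{\rm proj},J}$ and that both the cloner and this window are covariant under the $U(1)$ subgroup that leaves $\rho_{t,p,J}$ invariant. Everything else is routine: triangle inequalities, covariance, and the concentration of $q_J$.
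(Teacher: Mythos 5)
Your decomposition and bookkeeping reproduce the paper's proof almost step by step: the block structure over $J$, the triangle inequality through the coarse-grained ``ideal'' state (the paper's intermediate state of Eq.~(\ref{intermediate})), the projection term handled by Lemma~\ref{lemma-proj} at the typical spin, and the resampling plus atypical-spin mass controlled by the concentration of $q_J$ (you obtain it from the Schur--Weyl weight $q_J=m_J\,(p(1-p))^{n/2-J}N_J$, the paper from the explicit formula~(\ref{qJ}) together with the De Moivre--Laplace approximation and Hoeffding's inequality; both give the same Gaussian-scale concentration and either is fine).

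The genuine gap is exactly the step you flag as the ``main obstacle'': the cloning estimate (b). Your argument needs
\begin{align*}
\frac12\left\|\map{C}_{a\to b}\left(\rho_{t,p,a}\right)-\rho_{t,p,b}\right\|_1=O\!\left(\left(\frac{|a-b|}{a}\right)^{1-x'}\right),
\end{align*}
i.e.\ essentially linear in $\delta=|a-b|/a$, and you do not prove it. The route you sketch for $a<b$ --- the Werner-cloner global fidelity $(2a+1)/(2b+1)$, which in any case is a statement about pure symmetric inputs rather than the mixed states $\rho_{t,p,a}$ --- converts into a trace-distance bound only of order $\sqrt{1-F}\sim\sqrt{\delta}$. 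With $\delta=r/J_0=O(n^{-1/2-x})$ this yields a contribution of order $n^{-1/4-x/2}$, which in the interesting regime of small $x$ dominates the claimed leading term $(n/(2p-1))^{-1/2}$, so the lemma would not follow. This is precisely why the paper does not re-derive the estimate but invokes Lemma 1 of \cite{universal}, quoted as Eq.~(\ref{e-bound}), which supplies the $\delta^{1-x'}$ scaling for the $U(1)$-covariant mixed states $\rho_{t,p,J}$ (indeed via the support-concentration and covariance ingredients you anticipate). Until you either cite that result or prove the linear-in-$\delta$ bound, your proposal establishes only an $O(n^{-1/4})$-type error rather than the stated bound; the remainder of your outline is sound.
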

\begin{proof}
The output state of the unknown spectrum compressor $(\map{E},\map{D})$ can be derived by inserting the above decomposition into the protocol description. The state is expressed as
\begin{align}\label{output}
&\map{D}\circ\map{E}\left(\rho_{t,p}^{\otimes n}\right)=\\
&\sum_{J}\sum_{K\in\set{L}_{i(J)}}\frac{q_J|K\>\<K|}{|\set{L}_{i(J)}|} \otimes\map{C}_{f(J)\to K}\circ\map{P}_{{\rm proj},f(J)}\circ\map{C}_{J\to f(J)}\left(\rho_{t,p,J}\right)\otimes\frac{I_{m_K}}{m_K}. \nonumber
\end{align}
To bound the error $\epsilon_{t,p}=\frac12\left\|\map{D}\circ\map{E}\left(\rho_{t,p}^{\otimes n}\right)-\rho_{t,p}^{\otimes n}\right\|_1$, we first use the concentration property of $q_J$. Notice that the probability distribution $q_J$ in Eq. (\ref{decomp}) has the explicit form \cite{universal}
\begin{align}
q_{J}=\frac{2J+1}{2J_0}&\left[  B\left(\frac n2 +  J+1 \right) -B\left(\frac n2 - J \right)\right]\label{qJ}
\end{align}
where $B(k)=p^{k}(1-p)^{n-k}{n\choose k}$ and $J_0 = (p-1/2)(n+1)$, which is a Gaussian distribution concentrated in an interval of width $O(\sqrt{n})$ around $J_0$ when $n$ is large. We define the following interval
\begin{align*}
\set{C}=\left\{\lfloor J_0-n^{(1+x)/2}\rfloor,\dots,\lfloor J_0+n^{(1+x)/2}\rfloor\right\}.
\end{align*}
For $J$ outside of $\set{C}$, we use the upper bound of trace distance $\|\cdot\|_1\le2$ to bound the error as $\sum_{J\not\in\set{C}}q_J$, while for $J$ in $\set{C}$ the error can be further split into the error term of the interpolation (namely the difference between the uniformly sampled distribution and $q_J$), the error term of the frequency projection, and the error term of $\map{C}_{J\to K}$.
The interpolation term can be bounded using the smoothness of $q_J$; the term for the frequency projection can be bounded using Lemma \ref{lemma-proj}; and the term for $\map{C}_{J\to K}$ can be bounding using the following result  (see Lemma 1 of \cite{universal}):
 $\map{C}_{J\to K}$   
 transforms $\rho_{t,p,J}$ into $\rho_{t,p,K}$ with  error
 \begin{align}\label{e-bound}
  \left\|      \map C_{J\to K}  \left(\rho_{t,p,J}\right)  -   \rho_{t,p,K} \right\|_1 \le \delta^{1-x}+O\left(\delta \right) \, ,  
\end{align}
where $x>0$ is an arbitrary constant and $\delta:={{|J-K|}/{J}}$.

Now we analyze the error following the aforementioned idea. We first express it as
\begin{align}
\epsilon_{t,p}&=\frac12\left\|\map{D}\circ\map{E}\left(\rho_{t,p}^{\otimes n}\right)-\rho_{t,p}^{\otimes n}\right\|_1\nonumber\\
&\le \frac12\left\|\map{D}\circ\map{E}\left(\rho_{t,p}^{\otimes n}\right)-\rho_{t,n}\right\|_1+\frac12\left\|\rho_{t,n}-\rho_{t,p}^{\otimes n}\right\|_1\label{error-inter1}
\end{align}
where $\rho_{t,n}$ is the intermediate state
\begin{align}\label{intermediate}
\rho_{t,n}:=\sum_{J\in\set{C}}\left(\sum_{K\in\set{L}_{i(J)}}\frac{q_{K}}{|\set{L}_{i(J)}|}\right)|J\>\<J|\otimes\rho_{t,p,J}\otimes\frac{I_{m_J}}{m_J}
\end{align}
defined for convenience. We rewrite Eq. (\ref{output}) as 
\begin{align*}
\map{D}\circ\map{E}\left(\rho_{t,p}^{\otimes n}\right)=&\sum_{J}\left(\frac{\sum_{K\in\set{L}_{i(J)}}q_K}{|\set{L}_{i(J)}|}\right) |J\>\<J|\\
&\otimes\map{C}_{f(J)\to J}\circ\map{P}_{{\rm proj},f(J)}\circ\map{C}_{K\to f(J)}\left(\rho_{t,p,K}\right)\otimes\frac{I_{m_J}}{m_J} 
\end{align*}
using the fact that the ranges of $J$ and $K$ are the same and $f(K)=f(J)$ for $K\in\set{L}_{i(J)}$.
Then the first term in Eq. (\ref{error-inter1}) can be bounded as
\begin{align*}
&\frac12\left\|\map{D}\circ\map{E}\left(\rho_{t,p}^{\otimes n}\right)-\rho_{t,n}\right\|_1\nonumber\\
\le&\max_{J\in\set{C}}\max_{K\in\set{L}_{i(J)}}\frac12\left\|\map{C}_{f(J)\to J}\circ\map{P}_{{\rm proj},f(J)}\circ\map{C}_{K\to f(J)}\left(\rho_{t,p,K}\right)-\rho_{t,p,J}\right\|_1\nonumber\\
&+\sum_{J\not\in\set{C}}\sum_{K\in\set{L}_{i(J)}}\frac{q_K}{|\set{L}_{i(J)}|}\\
\le&\max_{J\in\set{C}}\max_{K\in\set{L}_{i(J)}}\left\{\left\|\map{C}_{K\to J}\left(\rho_{t,p,K}\right)-\rho_{t,p,J}\right\|_1\right.\\
&\left.+\frac12\left\|\map{P}_{{\rm proj},J}\left(\rho_{t,p,J}\right)-\rho_{t,p,J}\right\|_1\right\}+\sum_{J\not\in\set{C}}\sum_{K\in\set{L}_{i(J)}}\frac{q_K}{|\set{L}_{i(J)}|}
\end{align*}
having used the monotonicity, the upper bound and the triangle inequality of trace distance. Applying Eq. (\ref{e-bound}) and Lemma \ref{lemma-proj}, we have
\begin{align*}
&\frac12\left\|\map{D}\circ\map{E}\left(\rho_{t,p}^{\otimes n}\right)-\rho_{t,n}\right\|_1\\
\le& \left(\frac{r}{J_0}\right)^{1-x'}+\frac32 (J_0)^{-\frac18\ln\frac{p}{1-p}}+\sum_{|J-J_0|>n^{1/2+x}}\sum_{K\in\set{L}_{i(J)}}\frac{q_K}{|\set{L}_{i(J)}|}
\end{align*}
for every $x'>0$, having used $J_0=O(n)$. The last term can be bounded using the tail property of $q_J$, and we have
\begin{align}
&\frac12\left\|\map{D}\circ\map{E}\left(\rho_{t,p}^{\otimes n}\right)-\rho_{t,n}\right\|_1\nonumber\\
\le& \left(\frac{r}{J_0}\right)^{1-x'}+\frac32 (J_0)^{-\frac18\ln\frac{p}{1-p}}+\sum_{|J-J_0|>n^{1/2+x}-r}q_J\nonumber\\
\le&\left(\frac{r}{J_0}\right)^{1-x'}+\frac32 (J_0)^{-\frac18\ln\frac{p}{1-p}}+2\exp\left(-\frac{2n^{x}}{p^2}\right),\label{error-inter2}
\end{align}
where the last step comes from Hoeffding's inequality.
Next, substituting Eqs. (\ref{decomp}) and (\ref{intermediate}) into the second error term in Eq. (\ref{error-inter1}), we can bound it as 
\begin{align}
\frac12\left\|\rho_{t,n}-\rho_{t,p}^{\otimes n}\right\|_1&\le \frac12\sum_{J\in\set{C}}\left|q_J-\left(\sum_{K\in\set{L}_{i(J)}}\frac{q_{K}}{|\set{L}_{i(J)}|}\right)\right|+\sum_{J\not\in\set{C}}q_J\nonumber\\
&\le\frac12\sum_{J\in\set{C}}\sum_{K\in\set{L}_{i(J)}}\frac{q_K}{|\set{L}_{i(J)}|}\left|\frac{q_J}{q_K}-1\right|+\sum_{J\not\in\set{C}}q_J\nonumber\\
&\le\frac12\max_{J\in\set{C}}\max_{K\in\set{L}_{i(J)}}\left|\frac{q_J}{q_K}-1\right|+\sum_{J\not\in\set{C}}q_J.\label{error-inter21}
\end{align}
Now, by Eq. (\ref{qJ}) we have
\begin{align*}
\frac{q_K}{q_{J}}&=\frac{2K+1}{2J+1}\cdot\frac{B\left(\frac{n}2+K+1\right)-B\left(\frac{n}2-K\right)}{B\left(\frac{n}2+J+1\right)-B\left(\frac{n}2-J\right)}.
\end{align*}
We further notice that, by the De Moivre-Laplace theorem, the binomial $B(k)$ can be approximated by a Gaussian for $J\in\set{C}$ and for large $n$. Precisely we have
$$B\left(\frac{n}2+J+1\right)=\frac{\exp\left[-\frac{(J-J_0)^2}{2np(1-p)}\right]}{\sqrt{2\pi np(1-p)}}\left[1+O\left(\frac{1}{\sqrt{n}}\right)\right].$$
Moreover, noticing that the term $B\left(\frac{n}2-J\right)$ is exponentially small compared to $B\left(\frac{n}2+J+1\right)$, we have
\begin{align}\label{step-e2-21}
\frac{q_K}{q_{J}}&\le\frac{J_0-n^{(1+x)/2}}{J_0-n^{(1+x)/2}-r}\left[1-O\left(n^{-x/2}\right)\right]\\
  \frac{q_K}{q_{J}}&\ge\frac{J_0-n^{(1+x)/2}-r}{J_0-n^{(1+x)/2}}\left[1-O\left(n^{-x/2}\right)\right]\nonumber
\end{align}
Applying Eq. (\ref{step-e2-21}) to the first term and Hoeffding's inequality to the second term in Eq. (\ref{error-inter21}), we have
\begin{align}\label{error-inter3}
\frac12\left\|\rho_{t,n}-\rho_{t,p}^{\otimes n}\right\|_1&\le\frac{r}{2J_0}+2\exp\left(-\frac{2n^x}{p^2}\right).
\end{align}
Finally, substituting Eqs. (\ref{error-inter2}) and (\ref{error-inter3}) into Eq. (\ref{error-inter1}) and noticing $r=n^{1/2-x}/2$ and $J_0=(p-1/2)(n+1)$, we get the upper bound of the overall error as
\begin{align}
\epsilon_{t,p}\le \left(\frac{n}{2p-1}\right)^{-1/2}+\frac32 [(p-1/2)n]^{-\frac18\ln\frac{p}{1-p}}.
\end{align} Note that we kept only leading-order terms.

\end{proof}

\section{Strong converse of clock compression.}
Here we show the optimality of our clock compressors, by proving a strong converse for clock compression which extends the result for classical probability distributions \cite{classical} to the quantum regime. The strong converse states that, if a compressor $(\map{E},\map{D})$ requires a quantum memory of dimension only $d_{\rm enc}=O(n^{1/2-\delta})$ ($\delta>0$), then the expected error $\mathbb{E}_t\left[\frac12\left\|\map{D}\circ\map{E}\left(\rho_{t,p}^{\otimes n}\right)-\rho_{t,p}^{\otimes n}\right\|_1\right]$ (relative to the uniform distribution over $\varphi$) converges to 1 the  large $n$ limit.   The proof proceeds   by contradiction.  Let us assume that the error does not converge to 1, namely
\begin{align}
\mathbb{E}_t\left[\frac12\left\|\map{D}\circ\map{E}\left(\rho_{t,p}^{\otimes n}\right)-\rho_{t,p}^{\otimes n}\right\|_1\right]\le 1-\Delta
\end{align}
for some $\Delta>0$ and for every $n$. Applying Markov's inequality, we immediately get that
\begin{align*}
&\mathbf{Prob}\left[\frac12\left\|\map{D}\circ\map{E}\left(\rho_{t,p}^{\otimes n}\right)-\rho_{t,p}^{\otimes n}\right\|_1\le 1-\Delta/4\right]\ge\frac{\Delta}{4-\Delta},
\end{align*}
which implies that there exists $5d_{\rm enc}/\Delta$ points $\{t_i\}$ in $[0,2\pi)$ satisfying 
$$|t_i-t_j|\ge \frac{2\pi\Delta}{4-\Delta}\left(\frac{5d_{\rm enc}}{\Delta}\right)^{-1}$$
for $i\not=j$ and 
$$\frac12\left\|\map{D}\circ\map{E}\left(\rho_{t_i,p}^{\otimes n}\right)-\rho_{t_i,p}^{\otimes n}\right\|_1\le 1-\Delta/4$$
for any $i$. Moreover, states in the set $\{\rho_{t_i,p}^{\otimes n}\}$ can be distinguished with arbitrarily high precision: since $|t_i-t_j|=\Omega(n^{-1/2+\delta})$, there exists a POVM $\{O_i\}_{i=1}^{5d_{\rm enc}/\Delta}$ satisfying that, for arbitrarily small $\epsilon>0$ and for any $i$,
$$\Tr\left[\rho_{t_i,p}^{\otimes n}\,O_i\right]> 1-\epsilon$$
for large enough $n$ (see, for instance, Theorem 1 of \cite{tomography}). Now, by definition of the trace norm we have
\begin{align*}
\Tr\left[\map{D}\circ\map{E}\left(\rho_{t_i,p}^{\otimes n}\right)O_i\right]&>1-\epsilon-\frac12\left\|\map{D}\circ\map{E}\left(\rho_{t_i,p}^{\otimes n}\right)-\rho_{t_i,p}^{\otimes n}\right\|_1\\
&\ge \Delta/4-\epsilon\qquad\forall\, i.
\end{align*}
Using this property we have
\begin{align*}
d_{\rm enc}&=\Tr[\map{D}(I_{\rm enc})]\\
&=\sum_{i=1}^{5d_{\rm enc}/\Delta}\Tr[\map{D}(I_{\rm enc})O_i]\\
&\ge\sum_{i=1}^{5d_{\rm enc}/\Delta}\Tr\left[\map{D}\circ\map{E}\left(\rho_{t_i,p}^{\otimes n}\right)O_i\right]\\
&> \frac{5d_{\rm enc}}{\Delta}\left(\frac{\Delta}4-\epsilon\right),
\end{align*}
where $I_{\rm enc}$ denotes the identity on the encoding subspace. Remember that $\epsilon$ can be made arbitrarily small, and the above inequality leads to a contradiction when we set, for example, $\epsilon=\Delta/20$.

\section{Conclusion.}
In this work, we studied the compression of qubit clock states. We present optimal compressors for $n$ identical clock qubits that require only $(1/2)\log n$ qubits, which is half of the quantum memory cost of the general qubit compression \cite{universal}. The classical memory cost has also been cut down compared to the protocol in \cite{stopwatch}. Our results can be  applied to compress probes in quantum parameter estimation or to build quantum sensors. It remains an open question how these results can be generalized to clocks of higher or infinite dimensions.

\subsection*{Acknowledgments}
This work is supported by the Canadian Institute for Advanced Research
(CIFAR), by the Hong Kong Research Grant Council
through Grant No. 17300317, by National Science
Foundation of China through Grant No. 11675136,
by the HKU Seed Funding for Basic Research, and by the Foundational Questions Institute through grant FQXi-RFP3-1325.
MH was supported in part by JSPS Grants-in-Aid for Scientific Research 
(A) No.17H01280 and (B) No. 16KT0017.
\bibliographystyle{unsrt}
\bibliography{ref}

\end{document}